\newcommand{\ie}{i.e.}
\newcommand{\eg}{e.g.}
\renewcommand{\S}{Section~}
\newcommand{\mc}[1]{\mathcal{#1}}
\newcommand{\mbb}[1]{\mathbb{#1}}
\newcommand{\vect}[1]{\boldsymbol{#1}}
\newcommand{\vectm}[1]{\mathbf{#1}}
\newcommand{\set}[1]{\mathcal{#1}}
\newcommand{\compl}[1]{\overline{#1}}
\newcommand{\transpose}{\top}
\newcommand{\normop}[1]{ \Upsilon(#1) }
\newcommand{\inner}[2]{\left\langle #1,#2 \right\rangle}
\DeclareMathOperator{\ind}{ind}
\DeclareMathOperator{\lind}{lind}
\DeclareMathOperator{\vlind}{\overrightarrow{\lind}}
\DeclareMathOperator{\aut}{Aut}
\DeclareMathOperator{\Span}{Span}
\DeclareMathOperator{\Hom}{Hom}
\DeclareMathOperator{\core}{Core}
\DeclareMathOperator{\minrank}{minrank}
\newtheorem{lemma}{Lemma}
\newtheorem{theorem}{Theorem}
\newtheorem{corollary}{Corollary}
\newtheorem{definition}{Definition}
\newtheorem{example}{Example}
\newtheorem{construction}{Construction}
\author{\IEEEauthorblockN{Javad B. Ebrahimi\IEEEauthorrefmark{1}, Mahdi~Jafari~Siavoshani\IEEEauthorrefmark{2}}
\IEEEauthorblockA{\IEEEauthorrefmark{1}Institute of Network Coding,
Chinese University of Hong Kong, Hong Kong\\
\IEEEauthorrefmark{2}Computer Engineering Department, Sharif University of Technology, Tehran, Iran\\
Email: \text{javad@inc.cuhk.edu.hk}, \text{mjafari@sharif.edu}}
\thanks{The work described in this paper was partially supported by a grant from University Grants Committee of the Hong Kong Special Administrative Region, China (Project No. AoE/E-02/08).}
}
\title{Linear Index Coding via Graph Homomorphism}
\begin{document}
\maketitle

\begin{abstract}
In \cite{YosBirkJayKol-IT11,ChlamHaviv-CoRR11} it is shown that the minimum broadcast rate of a linear index code over a finite field $\mathbb{F}_q$ is equal to an algebraic invariant of the underlying digraph, called $\minrank_q$. In \cite{PeetersRene-Combinatorica96}, it is proved that for $\mathbb{F}_2$ and any positive integer $k$, $\minrank_q(G) \leq k$ if and only if there exists a homomorphism from the complement of the graph $G$ to the complement of a particular undirected graph family called ``graph family $\{G_k\}$''. As observed in \cite{ChlamHaviv-CoRR11}, by combining these two results one can relate the linear index coding problem of undirected graphs to the graph homomorphism problem. In \cite{EbrahimiJafari-ITW14}, a direct connection between linear index coding problem and graph homomorphism problem is introduced. In contrast to the former approach, the direct connection holds for digraphs as well and applies to any field size. More precisely, in \cite{EbrahimiJafari-ITW14}, a graph family $\{H_k^q\}$ has been introduced and shown that whether or not the scalar linear index of a digraph $G$ is less than or equal to $k$ is equivalent to the existence of a graph homomorphism from the complement of $G$ to the complement of $H_k^q$.

In this paper, we first study the structure of the digraphs $H_k^q$ defined in \cite{EbrahimiJafari-ITW14}. Analogous to the result of \cite{ChlamHaviv-CoRR11} about undirected graphs, we prove that $H_k^q$'s are vertex transitive digraphs. Using this, and by applying a lemma of Hell and Nesetril \cite{HellNese-Book04-GraphHomomorphism}, we derive a class of necessary conditions for digraphs $G$ to satisfy $\lind_q(G)\leq k$.  Particularly, we obtain new lower bounds on $\lind_q(G)$. 


%
%

Our next result is about the computational complexity of scalar linear index of a digraph. It is known that deciding whether the scalar linear index of an undirected graph is equal to $k$ or not is NP-complete for $k\ge 3$ and is polynomially decidable for $k=1,2$ \cite{PeetersRene-Combinatorica96}. 
For digraphs, it is shown in \cite{DauSkaChe-IT14} that for the binary alphabet, the decision problem for $k=2$ is NP-complete. We use graph homomorphism framework to extend this result to arbitrary alphabet.
\end{abstract}

\begin{IEEEkeywords}
Index coding, linear index coding, graph homomorphism, minrank of a graph, computational complexity of the minrank.
\end{IEEEkeywords}

\section{Introduction}\label{sec:Introduction}
The \emph{index coding} problem, first introduced by Birk and Kol in the context of satellite communication \cite{BirkKol-IndexCoding-INFOCOM98}, has received significant attention during past years (see for example \cite{AlonLubetStavWeinHassid-Focs08, LubetStav-IT09, RouaySprintGeorgh-IT10, BlaKleLub-CoRR10, YosBirkJayKol-IT11, BerLangberg-ISIT11, HavivLangberg-ISIT12, TehraniDimakisNeely-ISIT12, MalCadJafar-CoRR12, ShanmugamDimakisLangberg-CoRR13, ArbabBanderKimSasogluWang-ISIT13}). This problem has many applications such as satellite communication, multimedia distribution over wireless networks, and distributed caching. Despite its simple description, the index coding problem has a rich structure and it has intriguing connections to some information theory problems. It has been recently shown that the feasibility of any network coding problem can be reduce to an equivalent feasibility problem in the index coding problem (and vice versa) \cite{EffrosSalimLangberg-CoRR12}. Also an interesting connection between index coding problem and interference alignment technique has been appeared in \cite{MalCadJafar-CoRR12}.

In this work, we focus on the index coding problems that can be represented by a side information graph (defined in \S\ref{sec:ProbStatement}), \ie, user demands are distinct and there is exactly one receiver for each message. For this case we consider the framework for studying the index coding problem that uses ideas from graph homomorphism. More precisely, it is known that the minimum linear broadcast rate of an index coding problem of a graph $G$ over a finite field $\mathbb{F}_q$, denoted by $\lind_q(G)$, can be upper bounded by the minimum broadcast rate of another index coding problem if there exists a homomorphism from the (directed) complement of the side information graph of the first problem to that of the second problem \cite{EbrahimiJafari-ITW14}. 


Conversely, for every positive integer $k$ and prime power $q$, there exits a digraph $H^{q}_k$ such that  $\lind_q(H^q_k)$ is equal to $k$ and the complement of any digraph whose $q$-arry linear index is at most $k$ is homomorphic to $\compl{H^{q}_k}$, see \cite{EbrahimiJafari-ITW14}. The set of the graphs $H^{q}_k$ is analogous to the ``graph family $G_k$'' defined in \cite{ChlamHaviv-CoRR11} for studying a parameter of the graph called $\minrank$. In contrast to those graphs, $H^{q}_k$ are defined for arbitrary finite fields  and more importantly, they can be utilised to study the linear index code even if the graphs of interest are directed.


In this paper, we first study the structure of the digraphs $H_k^q$ defined in \cite{EbrahimiJafari-ITW14}. Analogous to the result of \cite{ChlamHaviv-CoRR11} about undirected graphs, we prove that $H_k^q$'s are vertex transitive digraphs. Using this, and by applying a lemma of Hell and Nesetril \cite{HellNese-Book04-GraphHomomorphism}, we derive a class of necessary conditions for digraphs $G$ to satisfy $\lind_q(G)\leq k$.  

In particular, we conclude that if $\lind_q(G) \leq k$ then $\frac{|G|}{\omega(G)} \leq \frac{|H_k^q|}{\omega(H_k^q)}$ in which $|G|$ and $\omega(G)$ stand for the number of vertices and the clique number of $G$, respectively.  We find a lower bound on $\omega(H_k^q)$ for every prime power $q$ and every integer $k$. Therefrom we get a new lower bound for $\lind_q(G)$. We use the same technique to obtain other lower bounds on $\lind_q(G)$ in terms of certain graph theoretic parameters of $G$ and also the underlying field size $q$.

Our next result is about the computational complexity of scalar linear index of a digraph. It is known that the deciding whether the scalar linear index of a undirected graph is equal to $k$ or not is NP-complete for $k\ge 3$ and is polynomially decidable for $k=1,2$ \cite{PeetersRene-Combinatorica96}. 
For digraphs, it is shown in \cite{DauSkaChe-IT14} that for the binary alphabet, the decision problem for $k=2$ is NP-complete. We use graph homomorphism framework to extend this result to arbitrary alphabet.



The remainder of this paper is organised as follows. In \S\ref{sec:ProbStatement} we introduce notation, some preliminary concepts about graph homomorphism and give an overview of the previous results about the connection between linear index coding and graph homomorphism. In \S\ref{sec:EquivFormLinScalBinarIndexCoding} we review one of the previous result about an equivalent formulation for the scalar linear index coding problem. The main results of this paper and their proofs are presented in \S\ref{sec:MainResult}. Finally, the paper is concluded in \S\ref{sec:Conclusion}.

\section{Notation and Problem Statement}\label{sec:ProbStatement}

\subsection{Notation and Preliminaries}
For convenience, we use $[m:n]$ to denote for the set of natural numbers $\{m,\ldots,n\}$. 
Let $x_1,\ldots,x_n$ be a set of variables. Then for any subset $\set{A}\subseteq [1:n]$ we define 
$x_{\set{A}}\triangleq (x_i:i\in\set{A})$.

All of the vectors are column vectors unless otherwise stated. 
The inner product of two vectors $v$ and $w$ is denoted by $\inner{v}{w}$.
If $A$ is a matrix, we use $[A]_j$ to denote its $j$th column.
We use $\vectm{e}_i \in \mbb{F}_q^k$ to denote for a vector which has one at $i$th
position and zero elsewhere.

A directed graph (digraph) $G$ is represented by $G(V,E)$ where $V$ is the set of vertices and $E\subseteq (V\times V)$ is the set of edges. For $v\in V(G)$ we denote by $N^+_G(v)$ as the set of outgoing neighbours of $v$, \ie, $N_G^+(v)=\{u\in V:(v,u)\in E(G)\}$. For a digraph $G$ we use $\compl{G}$ to denote for its \emph{directional complement}, \ie, $(u,v)\in E(G)$ iff $(u,v)\notin E(\compl{G})$.

A vertex coloring (coloring for short) of a graph (digraph) $G$ is an assignment of colors to the vertices of $G$ such that there is no edge between the vertices of the same color. A coloring that uses $k$ colors is called a $k$-coloring. The minimum $k$ such that a $k$-coloring exists for $G$ is called the \emph{chromatic umber}.

The \emph{clique number} of a graph (digraph) is defined as the maximum size of a subset of the vertices such that each element of the subset is connected to every other vertex in the subset by an edge (directed edge). Notice that for the directed case, between any two vertices of the subset there must be two edges in opposite directions.



The \emph{independence number} of a graph $G$, is the size of the largest independent\footnote{An \emph{independent set} of a graph $G$ is a subset of the vertices such that no two vertices in the subset represent an edge of $G$.} set and is denoted by $\alpha(G)$. For both directed and undirected graphs, it holds that $\alpha(G)=\omega(\compl{G})$.

\begin{definition}[Homomorphism, see~\cite{HellNese-Book04-GraphHomomorphism}]
Let $G$ and $H$ be any two digraphs. A \emph{homomorphism} from $G$ to $H$, written
as $\phi:G\mapsto H$ is a mapping $\phi:V(G)\mapsto V(H)$ such that $(\phi(u),\phi(v))\in E(H)$ whenever $(u,v)\in E(G)$.
If there exists a homomorphism from $G$ to $H$ we write $G\rightarrow H$,
and if there is no such homomorphism we shall write $G\nrightarrow H$. In the former case we say that $G$ is homomorphic to $H$.
We write $\Hom(G,H)$ to denote for the set of all homomorphism from $G$ to $H$.
\end{definition}

\begin{definition}
On the set of all loop-less digraphs $\set{G}$, we define the partial pre
order ``$\preccurlyeq$''  as follows. For every pair of $G,H\in\set{G}$,
$G\preccurlyeq H$ if and only if there exists a homomorphism $\phi:\compl{G}\mapsto \compl{H}$. It is straightforward to see that ``$\preccurlyeq$'' is reflexive and transitive. Moreover, if $G\preccurlyeq H$
and $H\preccurlyeq G$, then the digraphs $\compl{G}$ and $\compl{H}$ are
homomorphically equivalent (\ie, $\compl{G}\rightarrow\compl{H}$ and
$\compl{H}\rightarrow\compl{G}$). In this case we write $G\sim H$.
\end{definition}

\begin{definition}
Let $\set{D}\subseteq\set{G}$ be an arbitrary set of digraphs. A mapping $h:\set{D}\mapsto\mbb{R}$ is called \emph{increasing} over $\set{D}$ if for every 
$G,H\in\set{D}$ such that $G\preccurlyeq H$ then $h(G)\le h(H)$.
\end{definition}

\subsection{Problem Statement}
Consider the communication problem where a transmitter aims to communicate
a set of $m$ messages $x_1,\ldots,x_m\in\set{X}$ to $m$ receivers by broadcasting $k$ 
symbols $y_1,\ldots,y_{k}\in\set{Y}$, over a public noiseless channel. We assume that for each $j\in[1:m]$,
the $j$th receiver has access to the side information $x_{\set{A}_j}$, \ie, a subset
$\set{A}_j \subseteq [1:m]\setminus \{j\}$ of messages. Each receiver $j$ intends to recover $x_j$ from 
$(y_1,\ldots,y_k,x_{\set{A}_j})$.

This problem, which is a basic setting of the \emph{index coding} problem, 
can be represented by a \emph{directed side information graph} $G(V,E)$ where $V$ represents the 
set of receivers/messages and there is an edge from node $v_i$ to $v_j$, \ie, 
$(v_i,v_j)\in E$, if the $i$th receiver has $x_j$ as side information.
An index coding problem, as defined above, is completely characterized by 
the side information sets $\set{A}_j$. 
%
%

In the following definitions, we formally define validity of an index code and some other basic concepts in index coding (see also \cite{AlonLubetStavWeinHassid-Focs08,BlaKleLub-CoRR10}, and \cite{ShanmugamDimakisLangberg-CoRR13}).

\begin{definition}[Valid Index Code]
A \emph{valid index code} for $G$ over an alphabet $\set{X}$ is a set 
$(\Phi,\{\Psi_i\}_{i=1}^m)$
consisting of: (i) an encoding function $\Phi:\set{X}^m\mapsto \set{Y}^k$
which maps $m$ source messages to a
transmitted sequence of length $k$ of symbols from $\set{Y}$;
(ii) a set of $m$ decoding functions $\Psi_i$ such that for each 
$i\in[1:m]$ we have $\Psi_i(\Phi(x_1,\ldots,x_m),x_{\set{A}_i})=x_i$.
\end{definition}

\begin{definition}
Let $G$ be a digraph, and $\set{X}$ and $\set{Y}$ are the source and the message alphabets, respectively.
\\
(i) The ``\emph{broadcast rate}'' of an index code $(\Phi,\{\Psi_i\})$ is defined as $\ind_{\set{X}}(G,\Phi,\{\Psi_i\})\triangleq \frac{k \log|\set{Y}|}{\log|\set{X}|}$.\\
(ii) The ``\emph{index}'' of $G$ over $\set{X}$, denoted by $\ind_{\set{X}}(G)$, is defined as
\[
\ind_{\set{X}}(G)= \inf_{\Phi,\{\Psi_i\} } \ind_{\set{X}}(G,\Phi,\{\Psi_i\}).
\]
(iii) If $\set{X}=\set{Y}=\mbb{F}_q$ (the $q$-element finite field for some prime power $q$), the ``\emph{scalar linear index}'' of $G$, denoted by $\lind_{q}(G)$ is defined as $\lind_{q}(G) \triangleq \inf_{\Phi,\{\Psi_i\}} \ind_{\mbb{F}_q} (G,\Phi,\{\Psi_i\})$ in which the infimum is taken over the coding functions of the form $\Phi=(\Phi_1,\ldots,\Phi_k)$ and each $\Phi_i$ is a linear combination of $x_j$'s with coefficients from $\mbb{F}_q$.\\
(iv) If $\set{X}=\mbb{F}_q^\ell$ and $\set{Y}=\mbb{F}_q$, the \emph{vector linear index} for $G$, denoted by $\vlind_{q^\ell}(G)$ is defined as $\vlind_{q^\ell}(G) \triangleq \inf_{\Phi,\{\Psi_i\}} \ind_{\mbb{F}_q^\ell}(G,\Phi,\{\Psi_i\})$ where the infimum is taken over all coding functions $\Phi=(\Phi_1,\ldots,\Phi_k)$ such that $\Phi_i: \mbb{F}_q^{\ell m}\mapsto\mbb{F}_q$ are $\mbb{F}_q$-linear functions.\\
(v) The ``\emph{minimum broadcast rate}'' of the index coding problem of $G$ is defined as
$\ind(G) \triangleq \inf_{\set{X}}\ \inf_{\Phi,\{\Psi_i\} } \ind_{\set{X}}(G,\Phi,\{\Psi_i\})$.
\end{definition}

\subsection{Index Coding via Graph Homomorphism}\label{sec:IndCode_via_GrphHom}
For the sake of completeness, here in this section, we briefly review the connection of index coding problem and graph homomorphism.

Consider two different instances of the index coding problem.
It is shown in \cite{EbrahimiJafari-ITW14} that if there exists a homomorphism from the complement of the side information graph of the first problem to that of the second one, then one can translate an index code of the later problem to an index code for the former one. In particular, for the linear scalar index coding problem, one can state the following result. 
\begin{theorem}[{see \cite[Corollary~1]{EbrahimiJafari-ITW14}}]
Consider two instances of the index coding problems over the digraphs $G$ and $H$. 
If $G \preccurlyeq H$ then we have
\[
\lind_q(G) \le \lind_q(H).
\]
In other words, the function $\lind_q(\cdot)$ is a non-decreasing function on the pre order set $(G,\preccurlyeq)$.
\end{theorem}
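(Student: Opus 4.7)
The plan is to translate a scalar linear index code for $H$ into one for $G$ by pulling back along the given homomorphism. The natural tool is the equivalent vector formulation of scalar linear index coding that is reviewed in \S\ref{sec:EquivFormLinScalBinarIndexCoding}. In that formulation, $\lind_q(H) \le k$ corresponds to the existence of vectors $u_h, v_h \in \mbb{F}_q^k$ indexed by $h \in V(H)$ such that $\inner{u_h}{v_h} = 1$ for every $h$, and $\inner{u_h}{v_{h'}} = 0$ whenever $h \ne h'$ and $(h,h') \notin E(H)$. I would set $k := \lind_q(H)$ and fix such an assignment as the starting data.

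Using the hypothesis $G \preccurlyeq H$, fix a homomorphism $\phi : \compl{G} \to \compl{H}$ and define a pulled-back assignment on $V(G)$ by $u'_g := u_{\phi(g)}$ and $v'_g := v_{\phi(g)}$. I would then argue that this assignment witnesses $\lind_q(G) \le k$ via the same equivalent formulation. The diagonal condition $\inner{u'_g}{v'_g} = 1$ is immediate since it is inherited from $\phi(g) \in V(H)$. For the off-diagonal condition, suppose $g \ne g'$ and $(g,g') \notin E(G)$, so that $(g,g') \in E(\compl{G})$; by the homomorphism property $(\phi(g),\phi(g')) \in E(\compl{H})$, which forces $\phi(g) \ne \phi(g')$ and $(\phi(g),\phi(g')) \notin E(H)$, and therefore
\[
\inner{u'_g}{v'_{g'}} \;=\; \inner{u_{\phi(g)}}{v_{\phi(g')}} \;=\; 0,
\]
as required. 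Applying the equivalent formulation in the reverse direction to $G$ yields $\lind_q(G) \le k = \lind_q(H)$.

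The main subtle point I expect to face is the bookkeeping around loops in the complement operation of \S\ref{sec:ProbStatement}: it must be matched consistently with the requirement that distinct non-adjacent vertices of $G$ are never identified by $\phi$. This is exactly what the homomorphism property buys, provided $\compl{H}$ is treated as loop-free on off-diagonal pairs, which is the implicit convention of the preorder $\preccurlyeq$ on $\set{G}$. Once this convention is pinned down, the argument is a direct substitution with no further computation, and the monotonicity statement ``$\lind_q(\cdot)$ is non-decreasing on $(\set{G},\preccurlyeq)$'' is just a restatement of the inequality for every comparable pair.
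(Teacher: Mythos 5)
The paper does not actually prove this theorem: it is imported verbatim from \cite[Corollary~1]{EbrahimiJafari-ITW14}, so there is no in-paper proof to compare your attempt against. Your reconstruction is correct and is the natural argument. Pulling back a rank-$k$ fitting assignment $(u_h,v_h)$ along $\phi$ amounts to composing $\phi$ with a homomorphism $\compl{H}\to\compl{H_k^q}$, so your proof is essentially the one-line alternative ``set $k=\lind_q(H)$, get $H\preccurlyeq H_k^q$ from Theorem~\ref{thm:Family_H_is_ICD}, conclude $G\preccurlyeq H_k^q$ by transitivity of $\preccurlyeq$, hence $\lind_q(G)\le k$'' --- except that your version relies only on the classical minrank/fitting-matrix characterization of $\lind_q$ rather than on the full ICD machinery, which avoids any circularity concern with results that may themselves use this monotonicity. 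The point you flag about loops is genuinely essential and not mere bookkeeping: under the paper's literal definition of directional complement, a loopless $H$ acquires a loop at every vertex of $\compl{H}$, a homomorphism could then identify two nonadjacent vertices of $G$, and the statement would be false (take $G$ edgeless on two vertices and $H$ a single vertex, so $\lind_q(G)=2>1=\lind_q(H)$). The intended convention in this literature --- complements taken within loopless digraphs --- is the one under which Theorem~\ref{thm:Family_H_is_ICD} is stated, and once that is fixed your argument, including the step that $(\phi(g),\phi(g'))\in E(\compl{H})$ forces $\phi(g)\ne\phi(g')$, goes through without further change.
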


For the case of linear scalar, it is shown in \cite{EbrahimiJafari-ITW14} that the opposite direction also holds. Namely, it is shown that for every positive integer $k$ and prime power $q$, there exits a digraph $H_k^q$ such that the $q$-arry linear index of $H_k^q$ is at most $k$ and the complement of any digraph whose $q$-arry linear index is also at most $k$ is homomorph to $\compl{H_k^q}$ .

The results of this paper are based on the properties of $H_k^q$'s and use \cite[Theorem~2]{EbrahimiJafari-ITW14} as the main tool (see also \cite{EbrahimiJafari-TechReport-CODIT14}). So, we will explain the structure of $H_k^q$ and state the theorem in the next section. 

\section{An Equivalent Formulation for Linear Scalar Index Coding Problem}\label{sec:EquivFormLinScalBinarIndexCoding}

In order to explain the structure of $H_k^q$ we need the following definitions:

\begin{definition}
A non-zero vector $a\in\mbb{F}_q^k$ is called \emph{normal} if its first non-zero element  
is equal to $1$, \ie,
\[
a = (0,\ldots,0,1,\star,\ldots,\star)^\transpose.
\]
\end{definition}

\begin{definition}
For $a\in\mbb{F}_q^k$ we define $\normop{a}$ to be the normalization of $a$, namely, $\normop{a}\triangleq \lambda a$ for some non-zero $\lambda\in\mbb{F}_q$ such that $\lambda a$ is normal.
\end{definition}

\begin{framed}
\begin{construction}[Graph $H_k^q$]\label{cons:Graph-H_k_q}
For every positive integer $k$ and prime power $q$, let $\set{V}$ be the set of all 
normal vectors in $\mbb{F}_q^k$. We define the set $\set{W}$ to be
\[
\set{W} = \left\{ (v,w) \ |\  v,w\in \set{V} \text{ and } \langle v,w \rangle \neq 0 \right\}.
\]
Now, we construct graph $H_k^q$ as follows. The vertex set of $H_k^q$ is $V(H_k^q) = \set{W}$
and $\big((v,w),(v',w') \big)\in E(H_k^q)$ if and only if $\langle v,w' \rangle \neq 0$.
In other words, $\big((v,w),(v',w') \big)\in E(\compl{H_k^q})$ if and only if $(v,w),(v',w')\in\set{W}$ 
and $\langle v,w' \rangle = 0$.
This construction leads to a graph of size $|V(H_k^q)|=\frac{q^k-1}{q-1} q^{k-1}$ which is regular with
in/out degree $q^{2(k-1)}-1$.
\end{construction} 
\end{framed}

As an example, $H_2^2$ is depicted in Figure~\ref{fig:Graph_Hk_for_K=2}.

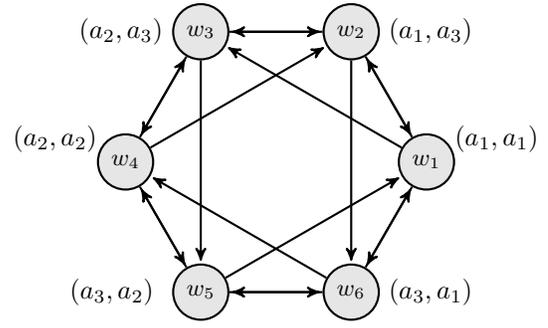
\begin{figure}
\begin{center}
\begin{tikzpicture}[->,>=stealth', shorten >=1pt, auto, node distance=2cm,
  thick, main node/.style={circle,fill=gray!20, draw, minimum size=6mm, font=\small}]

  \draw ({360/6 * (3 - 1)}:2cm) node[main node] (1) {$w_3$} [anchor=east] node {$(a_2,a_3)$~~~~};  
  \draw ({360/6 * (2 - 1)}:2cm) node[main node] (2) {$w_2$} [anchor=west] node {~~~$(a_1,a_3)$};
  \draw ({360/6 * (4 - 1)}:2cm) node[main node] (3) {$w_4$} [anchor=south east] node {$(a_2,a_2)$~~~};
  \draw ({360/6 * (5 - 1)}:2cm) node[main node] (4) {$w_5$} [anchor=east] node {$(a_3,a_2)$~~~~~};
  \draw ({360/6 * (1 - 1)}:2cm) node[main node] (5) {$w_1$} [anchor=south west] node {~~$(a_1,a_1)$};  
  \draw ({360/6 * (6 - 1)}:2cm) node[main node] (6) {$w_6$} [anchor=west] node {~~~$(a_3,a_1)$};  
  
  \path[every node/.style={font=\sffamily\small}]
    (1) edge  node {} (2)
        edge  node {} (3)
        edge node {} (4)
    (2) edge  node {} (1)
        edge  node {} (5)
        edge node {} (6)
    (3) edge  node {} (1)
        edge  node {} (4)
        edge node {} (2)
    (4) edge  node {} (3)
        edge  node {} (6)
        edge node {} (5)
    (5) edge  node {} (2)
        edge  node {} (6)
        edge node {} (1)
    (6) edge  node {} (4)
        edge  node {} (5)
        edge node {} (3);
\end{tikzpicture}
\end{center}
\caption{The digraph $H^2_2$ consists of $6$ vertices. The graph vertices are
labelled by pair of vectors $a_1=[0\ \ 1]^\transpose$, $a_2=[1\ \ 0]^\transpose$, 
and $a_3=[1\ \ 1]^\transpose$.
} 
\label{fig:Graph_Hk_for_K=2}
\end{figure}

Before we state the main tool of this paper, namely Theorem \ref{thm:Family_H_is_ICD}, we need the following definition:

\begin{definition}
A family of graphs $\{G_k\}_{k=1}^\infty$ is called \emph{$q$-index code defining ($q$-ICD)} if and only if
for every (side information) digraph $G$ we have
\[
\lind_q(G) \le k \Longleftrightarrow G \preccurlyeq G_k.
\] 
\end{definition}
It is not difficult to see that if $\{G_k\}_{k=1}^\infty$ is a $q$-ICD, so is the family $\{G'_k\}_{k=1}^\infty$ where $G_k$ and $G'_k$ are homomorphically equivalent. In particular, if $\{G_k\}_{k=1}^\infty$ is a $q$-ICD in which no $G_k$ can be replaced with a smaller digraph, then $G_k$'s are cores and conversely, if $\{G_k\}_{k=1}^\infty$ is a $q$-ICD, $\{\core(G_k)\}_{k=1}^\infty$ is the unique smallest $q$-ICD, up to isomorphism\footnote{For the definition of core of a graph, see \cite[Chapter~1.6]{HellNese-Book04-GraphHomomorphism}.}.

\begin{theorem}[see {\cite{EbrahimiJafari-TechReport-CODIT14}}] \label{thm:Family_H_is_ICD}
For every prime power $q$, the family $\{H_k^q\}_{k=1}^\infty$ as introduced above, is
$q$-ICD. 
\end{theorem}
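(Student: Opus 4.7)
The plan is to separately establish the two implications of the biconditional in the definition of $q$-ICD. The reverse direction (from homomorphism to index code) is essentially handed to us by Theorem~1, once we produce a length-$k$ code for $H_k^q$ itself; the forward direction (from index code to homomorphism) is where the real work lies.

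For the direction $G\preccurlyeq H_k^q \Longrightarrow \lind_q(G)\le k$, it suffices by Theorem~1 to exhibit a scalar linear index code of length $k$ for $H_k^q$. The construction of $H_k^q$ dictates the code: at vertex $(v,w)\in V(H_k^q)$, use $w$ as the encoding vector and $v$ as the linear decoding combiner. The transmitted word is $\sum_{(v',w')} w'\,x_{(v',w')}\in\mbb{F}_q^k$, and applying $v^{\transpose}$ yields $\sum_{(v',w')}\langle v,w'\rangle\, x_{(v',w')}$. The self-coefficient is $\langle v,w\rangle\neq 0$ by the very definition of $V(H_k^q)$, while the coefficient of $x_{(v',w')}$ is nonzero exactly when $((v,w),(v',w'))\in E(H_k^q)$, i.e., exactly when $(v',w')$ lies in the side information set of the receiver at $(v,w)$. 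All interfering terms are therefore cancellable, so $\lind_q(H_k^q)\le k$, and then $G\preccurlyeq H_k^q$ together with Theorem~1 gives $\lind_q(G)\le\lind_q(H_k^q)\le k$.

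For the converse direction $\lind_q(G)\le k \Longrightarrow G\preccurlyeq H_k^q$, start from any length-$k$ scalar linear index code for $G$ and extract, for each vertex $i\in V(G)$, an encoding vector $\phi_i\in\mbb{F}_q^k$ (the $i$-th column of the encoding matrix) and a decoding vector $w_i\in\mbb{F}_q^k$ (the linear combiner at receiver $i$). The validity of the code translates into the standard conditions $\langle w_i,\phi_i\rangle\neq 0$ and $\langle w_i,\phi_j\rangle=0$ for every $j\ne i$ with $(i,j)\in E(\compl{G})$. Define $\Psi:V(G)\to V(H_k^q)$ by $\Psi(i)=(\normop{w_i},\normop{\phi_i})$; this lands in $V(H_k^q)$ because normalization is a nonzero rescaling and preserves the nonvanishing of $\langle w_i,\phi_i\rangle$. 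The central claim is then that $\Psi$ is a homomorphism $\compl{G}\to\compl{H_k^q}$. For each $(i,j)\in E(\compl{G})$ we need to check (a) $\Psi(i)\neq\Psi(j)$ and (b) $\langle\normop{w_i},\normop{\phi_j}\rangle=0$. Item (b) is immediate, since inner products of normalized vectors vanish iff inner products of the originals do. Item (a) is the one place where some care is required: if $\Psi(i)=\Psi(j)$ then in particular $\normop{\phi_i}=\normop{\phi_j}$, so $\langle w_i,\phi_j\rangle$ is a nonzero scalar multiple of $\langle w_i,\phi_i\rangle\neq 0$, contradicting the orthogonality on the edge $(i,j)\in E(\compl{G})$.

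The main obstacle, in my view, is bookkeeping around scalar normalization and the semantics of the directional complement, rather than anything genuinely combinatorial. One must be explicit that $\compl{H_k^q}$ is loop-less, so that adjacency in $\compl{H_k^q}$ really does demand distinct endpoints (and that is precisely what item (a) delivers), and one must observe that passing from arbitrary vectors to their normalized representatives commutes with the ``inner product is zero / nonzero'' tests that drive the entire argument. Once these conventions are nailed down, each direction is a short verification, and the two assemble into the biconditional demanded by the definition of $q$-ICD.
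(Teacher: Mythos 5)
Your proposal is correct, and both directions check out. Note, however, that the paper itself does not prove Theorem~\ref{thm:Family_H_is_ICD}; it only cites \cite{EbrahimiJafari-TechReport-CODIT14}, so there is no in-paper argument to compare against. Your route is the standard one underlying that reference: for $G\preccurlyeq H_k^q\Rightarrow\lind_q(G)\le k$ you exhibit the tautological length-$k$ code on $H_k^q$ (encode with $w$, decode with $v^{\transpose}$, using that the nonzero interference coefficients $\langle v,w'\rangle$ are exactly the out-edges, hence exactly the side information), and for the converse you read off the pair $(\normop{w_i},\normop{\phi_i})$ from a fitting-matrix factorization and verify it is a homomorphism $\compl{G}\to\compl{H_k^q}$. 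Your handling of the two delicate points --- that normalization preserves the vanishing/nonvanishing of inner products, and that $\Psi(i)\neq\Psi(j)$ on edges of $\compl{G}$ so that loop-lessness of $\compl{H_k^q}$ causes no trouble --- is exactly right. The one step you invoke without proof is that a valid code with a \emph{linear encoder} (which is all the paper's definition of $\lind_q$ requires) admits \emph{linear decoders} $w_i$ with $\langle w_i,\phi_i\rangle\neq 0$ and $\langle w_i,\phi_j\rangle=0$ for $j\notin\set{A}_i\cup\{i\}$; this is a standard fact (decodability forces $\vectm{e}_i$ to lie in the row space of the encoding matrix plus $\Span\{\vectm{e}_j:j\in\set{A}_i\}$), but it deserves one explicit sentence since the decoders $\Psi_i$ are a priori arbitrary functions.
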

\section{Main results}\label{sec:MainResult}
The main results of this paper are presented in this section. First, we study the properties of digraphs $H_k^q$. Base on these properties and by applying Theorem~\ref{thm:Family_H_is_ICD}, we derive two classes of lower bounds for the scalar linear index of a digraph. Finally, some computational complexity results about $\lind_q(\cdot)$ are presented.

\subsection{Properties of digraph $H_k^q$}
In this section, we state some properties of the graph
family $\{H_k^q\}_{k=1}^\infty$.

\begin{lemma}\label{lem:H_k_q-vertex_transitive}
For every $k\in\mbb{Z}^+$ and prime power $q$, digraph $H_k^q$ is vertex-transitive\footnote{A (di)graph $H$ is said to be vertex-transitive if for any vertices $u,v$ of $H$ some automorphisms of $H$ (\ie, a bijective homomorphism of $H$ to $H$) takes $u$ to $v$ (\eg, see \cite[Chapter~1, p.14]{HellNese-Book04-GraphHomomorphism}).}.
\end{lemma}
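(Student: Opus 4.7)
The plan is to exhibit a transitive group action of $\mr{GL}_k(\mbb{F}_q)$ on $V(H_k^q)$ by digraph automorphisms. For each invertible $A\in\mr{GL}_k(\mbb{F}_q)$ I would define
\[
\phi_A(v,w) \,\triangleq\, \bigl(\normop{Av},\ \normop{A^{-\transpose} w}\bigr).
\]
The contragredient twist $A^{-\transpose}$ on the second coordinate is chosen so that inner products are preserved up to nonzero scalars: since $v^\transpose A^\transpose A^{-\transpose} w' = \langle v, w'\rangle$ for any $v,w'\in\mbb{F}_q^k$, and since normalization only rescales by elements of $\mbb{F}_q^*$, one gets $\langle\normop{Av},\normop{A^{-\transpose} w'}\rangle = \lambda\,\langle v,w'\rangle$ for some $\lambda\in\mbb{F}_q^*$. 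Taking $w'=w$ shows that $\phi_A$ sends $V(H_k^q)$ into itself, and for general pairs it shows that the directed edge relation of $H_k^q$ (namely $\langle v,w'\rangle\neq 0$) is preserved in both directions.

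I would then verify that $\phi_A\circ\phi_B=\phi_{AB}$, a routine computation based on $\normop{\mu u}=\normop{u}$ for $\mu\in\mbb{F}_q^*$. In particular $\phi_{A^{-1}}$ is a two-sided inverse of $\phi_A$, so $\phi_A$ is a bijection, hence an automorphism of $H_k^q$.

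The core step is transitivity. Given vertices $(v,w)$ and $(v',w')$ of $H_k^q$, the non-orthogonality conditions $\langle v,w\rangle\neq 0$ and $\langle v',w'\rangle\neq 0$ yield the direct-sum decompositions $\mbb{F}_q^k=\langle v\rangle\oplus w^\perp$ and $\mbb{F}_q^k=\langle v'\rangle\oplus (w')^\perp$, where $w^\perp\triangleq\{u:\langle u,w\rangle=0\}$. I would then define $A$ by sending $v\mapsto v'$ and any chosen basis of $w^\perp$ to any chosen basis of $(w')^\perp$. The resulting $A$ is invertible with $Av=v'$, and since $A$ maps $w^\perp$ onto $(w')^\perp$ one gets $A^\transpose w'\in (w^\perp)^\perp=\langle w\rangle$; equivalently, $A^{-\transpose} w$ is a nonzero scalar multiple of $w'$. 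Normalizing both coordinates yields $\phi_A(v,w)=(v',w')$, establishing transitivity.

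The main point of care is to use $A^{-\transpose}$ rather than $A$ itself on the second slot; without this contragredient twist the pairing $\langle v,w\rangle$ is not preserved, and neither the vertex condition nor the edge condition defining $H_k^q$ would be respected. Once this twist is in place, everything else is standard linear algebra over $\mbb{F}_q$, so I expect no further technical obstacles.
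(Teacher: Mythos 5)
Your proof is correct and follows essentially the same route as the paper: the paper's automorphism is $\phi(u,v)=(\normop{X^\transpose u},\normop{X^{-1}v})$, which is exactly your $\phi_A$ with $A=X^\transpose$, i.e.\ the same contragredient twist on the second coordinate. The only cosmetic difference is that the paper reduces transitivity to mapping an arbitrary vertex to the fixed base point $(\vectm{e}_1,\vectm{e}_1)$ by building $X$ from a basis of $d^\perp$ extended by $e$, whereas you construct $A$ directly between two arbitrary vertices via the decomposition $\mbb{F}_q^k=\langle v\rangle\oplus w^\perp$.
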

\begin{proof}
From the definition of vertex transitivity, it is enough to show that for every arbitrary vertex
$(d,e)\in V(H_k^q)$, there exists an automorphism $\phi\in\aut(H_k^q)$ such that $\phi(d,e)=(\vectm{e}_1,\vectm{e}_1)$
where $\vect{e}_1=[1\ 0\cdots 0]^\transpose\in\mbb{F}_q^k$. Let $\{\xi_1,\ldots,\xi_{k-1} \}$ be
a basis for the orthogonal complement of vector $d$, \ie, $\inner{d}{\xi_i}=0$. Since
$(d,e)\in V(H_k^q)$ it holds that $\inner{d}{e}\neq 0$ and hence $e\notin \Span(\xi_1,\ldots,\xi_{k-1})$.
Thus, $\{ \xi_1,\ldots,\xi_{k-1},d\}$ forms a basis for $\mbb{F}_q^k$. Define the $k\times k$
invertible matrix
\[
X = [e\ \ \xi_1 \ \cdots \ \xi_{k-1}].
\]
Notice that $X^\transpose d = \inner{e}{d} \vectm{e}_1$ and consequently $\normop{X^\transpose d} =\vectm{e}_1$.
Also, notice that $X \vectm{e}_1=e$ and since $X$ is invertible we can write $X^{-1} e=\vectm{e}_1$.
Define the function $\phi:V(H_k^q) \mapsto V(H_k^q)$ as 
$\phi(u,v)=(\normop{X^\transpose u},\normop{X^{-1} v} )$.
Since $(X^\transpose u)^\transpose (X^{-1} v)=\inner{u}{v}\neq 0$, we conclude that
$\phi(u,v)$ is indeed an element of $V(H_k^q)$. Last, we need to show that for each edge
$\big( (u,v),(u',v') \big)$, the image under $\phi$ is also an edge.
If $\big( (u,v),(u',v') \big)$ is an edge in $H_k^q$ then $\inner{u}{v'}=u^\transpose v' \neq 0$.
Thus, $(X^\transpose u)^\transpose (X^{-1} v') = \inner{X^\transpose u}{X^{-1} v'}\neq 0$
and therefore $\big( \phi(u,v), \phi(u',v')  \big)$ is an edge.
\end{proof}

Obviously, as a result of Lemma~\ref{lem:H_k_q-vertex_transitive}, we have the following corollary.
\begin{corollary}\label{cor:H_k_q-compl-vertex_transitive}
Digraph $\compl{H_k^q}$ is also vertex transitive.
\end{corollary}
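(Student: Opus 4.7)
The plan is to reduce the statement to Lemma~\ref{lem:H_k_q-vertex_transitive} via the general principle that a (loop-less) digraph and its directional complement share the same automorphism group. First I would argue this abstractly: if $\phi : V(H_k^q) \to V(H_k^q)$ is a bijection, then preserving the edge set $E(H_k^q)$ as a subset of $V(H_k^q) \times V(H_k^q)$ is logically equivalent to preserving its set-theoretic complement. Since the diagonal is fixed setwise by every bijection and the loop-less convention keeps the diagonal out of both $E(H_k^q)$ and $E(\compl{H_k^q})$, this yields $\aut(H_k^q) = \aut(\compl{H_k^q})$.

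With this identification in hand, the corollary is immediate. Given any two vertices $u, v \in V(\compl{H_k^q}) = V(H_k^q)$, Lemma~\ref{lem:H_k_q-vertex_transitive} supplies automorphisms $\phi_u, \phi_v \in \aut(H_k^q)$ with $\phi_u(u) = (\vectm{e}_1, \vectm{e}_1) = \phi_v(v)$; then $\phi_v^{-1} \circ \phi_u$ is an automorphism of $H_k^q$ sending $u$ to $v$, and by the observation above it is simultaneously an automorphism of $\compl{H_k^q}$. I do not anticipate any real obstacle here, since the only substantive content, namely producing automorphisms of $H_k^q$ that move vertices to any desired location, is already carried out in Lemma~\ref{lem:H_k_q-vertex_transitive}; the corollary requires only the structural observation that complementation preserves $\aut$.
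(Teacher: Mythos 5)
Your proposal is correct and matches the paper's intent: the paper gives no explicit proof, dismissing the corollary as an obvious consequence of Lemma~\ref{lem:H_k_q-vertex_transitive}, and the implicit reasoning is exactly your observation that $\aut(H_k^q)=\aut(\compl{H_k^q})$ (using finiteness so that a bijection with $\phi(E)\subseteq E$ in fact satisfies $\phi(E)=E$), so the transitive action transfers. Nothing is missing.
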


\begin{lemma}\label{lem:H_k_q-compl-ChromaticNum}
 The chromatic number of the complement of $H_k^q$ can be upper bounded as
$\chi(\compl{H_k^q})\le \frac{q^k-1}{q-1}$.
\end{lemma}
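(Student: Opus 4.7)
The plan is to exhibit an explicit proper coloring of $\compl{H_k^q}$ that uses exactly $\frac{q^k-1}{q-1}$ colors, namely one color for each normal vector in $\mathbb{F}_q^k$.  The count works out because, by the discussion in Construction~\ref{cons:Graph-H_k_q}, the set of normal vectors is in bijection with the $1$-dimensional subspaces of $\mathbb{F}_q^k$, and there are $\frac{q^k-1}{q-1}$ such subspaces.

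The coloring I would use is the projection onto the first coordinate: assign to each vertex $(v,w) \in V(H_k^q)$ the color $v$.  The verification that this is a proper coloring of $\compl{H_k^q}$ is essentially immediate from the definitions.  Consider two distinct vertices $(v,w)$ and $(v,w')$ receiving the same color, so that they share the first coordinate $v$ (and hence $w \neq w'$).  To show no edge of $\compl{H_k^q}$ joins them, recall that a directed edge $\bigl((v,w),(v,w')\bigr)$ in $\compl{H_k^q}$ would require $\langle v, w' \rangle = 0$, while a directed edge $\bigl((v,w'),(v,w)\bigr)$ would require $\langle v, w \rangle = 0$.  Both are ruled out by the membership conditions $(v,w) \in V(H_k^q)$ and $(v,w') \in V(H_k^q)$, which force $\langle v,w \rangle \neq 0$ and $\langle v,w' \rangle \neq 0$.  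Thus each color class is an independent set of $\compl{H_k^q}$, giving the claimed upper bound on $\chi(\compl{H_k^q})$.

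There is no real obstacle here; the only thing worth being careful about is the directed-graph convention used in the paper's definition of coloring (no edge of either orientation between equally coloured vertices), which is exactly what the two-sided check above handles.  A single short paragraph in the final write-up should suffice.
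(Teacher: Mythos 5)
Your proposal is correct and follows essentially the same route as the paper: color each vertex $(v,w)$ by its first coordinate $v$, count the $\frac{q^k-1}{q-1}$ normal vectors, and rule out edges of either orientation between $(v,w)$ and $(v,w')$ using the membership conditions $\langle v,w\rangle\neq 0$ and $\langle v,w'\rangle\neq 0$. Nothing is missing.
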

\begin{proof}
To proof the assertion of lemma, we show that there exists a colouring of size $(q^k-1)/(q-1)$ for $\compl{H_k^q}$.
In fact we show that the vectors $d\in \mc{V}$ assigned to each vertex of $H_k^q$
are a candidate for such a colouring (see Construction~\ref{cons:Graph-H_k_q}).
First, notice that there are at most $(q^k-1)/(q-1)$ of such vectors (colours). Then we need to
show that this assignment leads to a proper colouring of $\compl{H_k^q}$.
To this end, consider two vertices $(d,e),(d,e')\in V(\compl{H_k^q})$ which have the same colour $d$.
From Construction~\ref{cons:Graph-H_k_q}, we know that $\inner{d}{e}\neq 0$ and $\inner{d}{e'}\neq 0$. 
However, the above relations mean that there is no edge in $E(\compl{H_k^q})$ from 
$(d,e)$ to $(d,e')$ and vice versa. This completes the proof of lemma.
\end{proof}

\begin{lemma}\label{lem:H_k_q-MaxCliqueSize}
 The clique number of $H_k^q$ is lower bounded as $\omega(H_k^q) = \alpha(\compl{H_k^q}) \ge \frac{1}{4} (q^2-1) q^{k-2}$.
\end{lemma}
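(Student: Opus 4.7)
The plan is to exhibit an explicit clique of size at least $(q^2-1)q^{k-2}/4$ in $H_k^q$; by the identity $\omega(H_k^q) = \alpha(\compl{H_k^q})$ this gives both halves of the statement. I will split $\mbb{F}_q^k$ as $\mbb{F}_q^2 \oplus \mbb{F}_q^{k-2}$, writing each vector as $v = (\alpha, v_0)$ with $\alpha \in \mbb{F}_q^2$ and $v_0 \in \mbb{F}_q^{k-2}$, so that $\inner{v}{w} = \inner{\alpha}{\beta} + \inner{v_0}{w_0}$. The candidate clique will have the form $A \times B$, where every $v \in A$ has $v_0 = \vectm{0}$ while the $w_0$-part of $w \in B$ is unrestricted; this will collapse every cross inner-product condition to a purely two-dimensional statement about the $\alpha,\beta$ parts.

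The heart of the argument is a two-dimensional subproblem: find sets $S_A, S_B$ of normal vectors in $\mbb{F}_q^2$ such that $\inner{\alpha}{\beta} \ne 0$ for every $\alpha \in S_A$, $\beta \in S_B$, with $|S_A|\cdot|S_B| \ge (q^2-1)/4$. The key observation will be that the assignment sending a normal vector to the unique normal vector spanning its orthogonal complement is a well-defined involution $\pi$ on the set $\set{V}_2$ of the $q+1$ normal vectors in $\mbb{F}_q^2$ (because $(v^{\perp})^{\perp}=v$ in a nondegenerate $2$-dimensional form). The desired condition is then equivalent to $\pi(S_A) \cap S_B = \emptyset$, which forces $|S_A| + |S_B| \le q+1$, and this is saturated by taking $S_A$ arbitrary of size $\lfloor (q+1)/2 \rfloor$ and $S_B := \set{V}_2 \setminus \pi(S_A)$. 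A short elementary estimate then gives $\lfloor (q+1)/2\rfloor \cdot \lceil (q+1)/2\rceil \ge (q^2-1)/4$.

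Lifting back to $\mbb{F}_q^k$, I will set $A := \{(\alpha, \vectm{0}) : \alpha \in S_A\}$ and $B := \{(\beta, w_0) : \beta \in S_B,\ w_0 \in \mbb{F}_q^{k-2}\}$. Both are subsets of $\set{V}$, since the first nonzero entry of each member is inherited from the normal vector $\alpha$ or $\beta$. For any two vertices $(v,w), (v',w') \in A \times B$, the inner products $\inner{v}{w}$, $\inner{v}{w'}$, and $\inner{v'}{w}$ each reduce to an inner product of the fixed $\alpha$-part of an element of $A$ against an element of $S_B$, which is nonzero by construction; this simultaneously certifies $A \times B \subseteq V(H_k^q)$ and that it forms a clique (edges in both directions between every pair of distinct vertices). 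The size is $|A|\cdot|B| = |S_A|\cdot|S_B|\cdot q^{k-2}$, yielding the claimed bound.

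I expect the main obstacle to be cleanly pinning down the two-dimensional subproblem, and in particular verifying that orthogonality on $\set{V}_2$ really is an involution and that the extremal choice $|S_A| + |S_B| = q+1$ is achievable. Once that is established, the lift from $k=2$ to general $k$ is essentially bookkeeping: padding with zeros on the $A$ side and allowing arbitrary tails on the $B$ side contributes the $q^{k-2}$ factor, and the clique verification is immediate from the inner-product decomposition.
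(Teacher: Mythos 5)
Your proof is correct, and its overall strategy is the same as the paper's: exhibit a product set $\set{A}\times\set{B}$ of vertices in which the first components live in the first two coordinates (zero tail) and the second components carry a free tail in $\mbb{F}_q^{k-2}$, so that every cross inner product collapses to a two-dimensional one; the tail contributes the factor $q^{k-2}$. Where you differ is in how the two-dimensional core is produced. The paper writes the sets explicitly using a primitive element $g$, taking $\set{A}$ of the form $(1,g^i,0,\ldots,0)$ for $0\le i\le \frac{q-3}{2}$ together with $\vectm{e}_1$, and $\set{B}$ of the form $(1,a_2,\ldots,a_k)$ with $a_2$ avoiding $\{-g^1,\ldots,-g^{(q-1)/2}\}$; you instead observe that orthogonality is an involution $\pi$ on the $q+1$ normal vectors of $\mbb{F}_q^2$ and take $S_A$ of size $\lfloor(q+1)/2\rfloor$ with $S_B=\set{V}_2\setminus\pi(S_A)$. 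Your version is slightly more general: the paper's explicit exponent ranges implicitly require $q$ odd, whereas your floor/ceiling count $\lfloor(q+1)/2\rfloor\cdot\lceil(q+1)/2\rceil\ge(q^2-1)/4$ works for all prime powers, including even $q$. (Both proofs implicitly assume $k\ge 2$, which is harmless since the bound is trivial otherwise.)
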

\begin{proof}
Let $g$ be a primitive element of the field $\mbb{F}_q$ and define the sets $\set{A}$
and $\set{B}$ as follows,
\[
\set{A} = \left\{ (1, g^i, 0,\ldots,0)^\transpose \in\mbb{F}_q^k:\ 0\le i\le \frac{q-3}{2} \right\} \cup \{ \vectm{e}_1\}
\]
and
\begin{multline*}
\set{B} = \bigg\{ (1,a_2,\ldots,a_k)^\transpose \in\mbb{F}_q^k:\ a_2\notin \big\{-g^1,\ldots,-g^{\frac{q-1}{2}} \big\}, \\
a_3,\ldots,a_k\in\mbb{F}_q \bigg\}.
\end{multline*}
Notice that $|\set{A}|=\frac{q+1}{2}$ and $|\set{B}|=\frac{q-1}{2} q^{k-2}$ and also
for every $v\in\set{A}$ and $w\in\set{B}$, $\inner{v}{w}\neq 0$. Hence,
$\set{A}\times\set{B} \subseteq V(\compl{H_k^q})$ is an independent set of $\compl{H_k^q}$
of size $\frac{(q^2-1)}{4} q^{k-2}$. Thus $\omega(H_k^q) = \alpha(\compl{H_k^q}) \ge \frac{(q^2-1)}{4} q^{k-2}$. 
\end{proof}

\subsection{Lower Bounds}
As a result of Theorem~\ref{thm:Family_H_is_ICD}, we can prove the following lemma.

\begin{lemma}\label{lem:GenLowerBoundUsingMaximalElement}
Suppose that $h$ is an increasing function on \mbox{$(\set{G},\preccurlyeq)$} and $r$ is an upper bound on $h(H_k^q)$. For every digraph $G$, if $h(G)>r$ then $\lind_q(G) > k$.
\end{lemma}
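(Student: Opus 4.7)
The plan is to argue by contrapositive: rather than showing directly that $h(G) > r$ implies $\lind_q(G) > k$, I would assume $\lind_q(G) \leq k$ and deduce $h(G) \leq r$. This reformulation is natural because our machinery (Theorem~\ref{thm:Family_H_is_ICD}) is phrased as an equivalence between $\lind_q(G) \leq k$ and a homomorphism condition, which in turn feeds directly into the hypothesis that $h$ is monotone with respect to $\preccurlyeq$.

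The execution is a short three-link chain. First, from $\lind_q(G) \leq k$ and the fact that $\{H_k^q\}_{k=1}^\infty$ is a $q$-ICD family (Theorem~\ref{thm:Family_H_is_ICD}), I would invoke the $\Longleftarrow$ direction of the defining equivalence to conclude $G \preccurlyeq H_k^q$, i.e., $\compl{G} \to \compl{H_k^q}$. Second, because $h$ is increasing on $(\set{G}, \preccurlyeq)$, the relation $G \preccurlyeq H_k^q$ immediately yields $h(G) \leq h(H_k^q)$. Third, the hypothesis that $r$ is an upper bound on $h(H_k^q)$ gives $h(H_k^q) \leq r$, so chaining the two inequalities produces $h(G) \leq r$, which is the contrapositive of the desired conclusion.

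Since each step is a direct invocation of a previously established fact or of a definition in the statement, I do not anticipate any genuine obstacle; the lemma is really a packaging result that isolates the template by which all of our lower bounds on $\lind_q$ will be derived. The only thing to be slightly careful about is to quote the correct direction of the ICD equivalence (the one that converts the coding-theoretic inequality into a homomorphism), and to make sure monotonicity of $h$ is applied in the right direction (from $G$ to the ``larger'' graph $H_k^q$ in the $\preccurlyeq$-order).
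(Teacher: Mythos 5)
Your proposal is correct and is essentially identical to the paper's own proof: the paper also assumes $\lind_q(G)\le k$, applies Theorem~\ref{thm:Family_H_is_ICD} to get $G\preccurlyeq H_k^q$, and chains $h(G)\le h(H_k^q)\le r$, phrasing it as a contradiction rather than a contrapositive. The two formulations are logically the same argument.
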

\begin{proof}
If $\lind_q(G)\le k$ then by Theorem~\ref{thm:Family_H_is_ICD}, $G\preccurlyeq H_k^q$ 
and therefore $h(G)\le h(H_k^q)\le r$ which is a contradiction.
\end{proof}

Lemma~\ref{lem:GenLowerBoundUsingMaximalElement} is a powerful tool to find lower bounds on the scalar linear index of an index coding problem. Actually for every increasing function $h$ on $(\set{G},\preccurlyeq)$ we have one lower bound on the index coding problem. In the next theorem, using Lemma~\ref{lem:GenLowerBoundUsingMaximalElement}, we provide an alternative proof for the known lower bound on $\lind_q(G)$ in terms of the chromatic number of $\compl{G}$ \cite{LangbergSprinston-ISIT08}.

\begin{theorem}\label{thm:Ind_LowerBound_ChromaticNumber}
For every digraph $G$ we have 
\[
\lind_q(G) \ge \log_q\left( (q-1)\chi(\compl{G}) + 1 \right).
\]
\end{theorem}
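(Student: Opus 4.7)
My plan is to derive this from Lemma~\ref{lem:GenLowerBoundUsingMaximalElement} applied to the invariant $h(G) := \chi(\compl{G})$, with $r := (q^k-1)/(q-1)$. The upper bound $h(H_k^q) \le r$ is exactly the content of Lemma~\ref{lem:H_k_q-compl-ChromaticNum}, so the only piece that still needs justification is that $h$ is increasing on $(\set{G}, \preccurlyeq)$.

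For monotonicity, I would invoke the classical pull-back of colorings under graph homomorphism. Suppose $G \preccurlyeq H$, so that some $\phi : \compl{G} \to \compl{H}$ is a homomorphism. Given a proper coloring $\psi : V(\compl{H}) \to [1:c]$ with $c = \chi(\compl{H})$, the composition $\psi \circ \phi$ is a proper coloring of $\compl{G}$: for any edge $(u,v) \in E(\compl{G})$, the image $(\phi(u), \phi(v))$ is an edge of $\compl{H}$, so $\psi(\phi(u)) \neq \psi(\phi(v))$. Hence $\chi(\compl{G}) \le \chi(\compl{H})$, i.e., $h(G) \le h(H)$. Since the paper's notion of proper coloring forbids equal colors at the endpoints of any edge regardless of orientation, no modification is needed for the digraph setting.

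Combining the two ingredients, the contrapositive of Lemma~\ref{lem:GenLowerBoundUsingMaximalElement} yields $\chi(\compl{G}) \le (q^k-1)/(q-1)$ whenever $\lind_q(G) \le k$. Setting $k = \lind_q(G)$ and rearranging gives $(q-1)\chi(\compl{G}) + 1 \le q^{\lind_q(G)}$, which is the claimed bound. I do not anticipate a real obstacle: the whole argument is a clean composition of Theorem~\ref{thm:Family_H_is_ICD}, Lemma~\ref{lem:H_k_q-compl-ChromaticNum}, and the pull-back of colorings, and the main conceptual content is that each structural property of $H_k^q$ proved in the previous subsection feeds directly into the $(\set{G}, \preccurlyeq)$-monotonicity machine to produce a lower bound.
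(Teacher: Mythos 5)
Your proposal is correct and follows essentially the same route as the paper: apply Theorem~\ref{thm:Family_H_is_ICD} to get $G \preccurlyeq H_k^q$ for $k = \lind_q(G)$, use monotonicity of $\chi(\compl{\,\cdot\,})$ under $\preccurlyeq$, and bound $\chi(\compl{H_k^q})$ by Lemma~\ref{lem:H_k_q-compl-ChromaticNum}. The only cosmetic difference is that the paper cites Hell and Ne\v{s}et\v{r}il for the monotonicity step, whereas you supply the standard pull-back-of-colorings argument explicitly, which is fine.
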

\begin{proof}
The function $h(G)=\chi(\compl{G})$ is an increasing function on $(\set{G},\preccurlyeq)$ (\eg, see \cite[Corollary~1.8]{HellNese-Book04-GraphHomomorphism}). Suppose that $\lind_q(G)=k$. Therefore, by Theorem~\ref{thm:Family_H_is_ICD}, we have $G\preccurlyeq H_k^q$. Thus $\chi(\compl{G}) \leq \chi(\compl{H^q_k}) \leq \frac{q^k-1}{q-1}$ where the last inequality follows from Lemma~\ref{lem:H_k_q-compl-ChromaticNum}.
This completes the proof of theorem.
\end{proof}


In the following we introduce another tool to find lower bounds on $\lind_q(G)$.
Hell and Nesetril in \cite[Proposition~1.22]{HellNese-Book04-GraphHomomorphism} showed that if a graph $H$
is vertex transitive and $\phi:G\rightarrow H$ is a homomorphism, then for every graph $K$
we have $\frac{|G|}{N(G,K)} \le \frac{|H|}{N(H,K)}$
in which $N(L,K)$ is the size of the largest induced subgraph $L'$ of $L$ such that there
exists a homomorphism from $L'$ to $K$. Even though this result is stated for undirected
graphs, its proof naturally extends to digraphs. Therefore, using this result and 
Theorem~\ref{thm:Family_H_is_ICD} we can deduce the following theorem.

\begin{theorem}\label{thm:N(G,K)_N(H,K)_LowerBound}
For every digraph $G$, if there exists a digraph $K$such that
\[
\frac{|G|}{N(\compl{G},K)} > \frac{|H_k^q|}{N(\compl{H_k^q},K)}
\]
then $\lind_q(G) \ge k+1$.
\end{theorem}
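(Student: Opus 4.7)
The plan is to argue by contraposition: assume $\lind_q(G)\le k$ and derive the negation of the hypothesis, namely $\frac{|G|}{N(\compl{G},K)}\le\frac{|H_k^q|}{N(\compl{H_k^q},K)}$ for every digraph $K$.

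The first step is to invoke Theorem~\ref{thm:Family_H_is_ICD}. Since $\{H_k^q\}$ is $q$-ICD, $\lind_q(G)\le k$ is equivalent to $G\preccurlyeq H_k^q$, which by the definition of $\preccurlyeq$ yields an explicit homomorphism $\phi:\compl{G}\to\compl{H_k^q}$. The second step is to observe that the target digraph $\compl{H_k^q}$ is vertex-transitive; this is exactly Corollary~\ref{cor:H_k_q-compl-vertex_transitive}, which was deduced from Lemma~\ref{lem:H_k_q-vertex_transitive}. With these two ingredients in place, all the hypotheses of the Hell--Nesetril lemma are satisfied for the homomorphism $\phi:\compl{G}\to\compl{H_k^q}$.

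The third step is to apply the Hell--Nesetril inequality cited just above the statement: for vertex-transitive targets, any homomorphism $\phi:L\to M$ implies $\frac{|L|}{N(L,K)}\le\frac{|M|}{N(M,K)}$ for every $K$. Specialising to $L=\compl{G}$ and $M=\compl{H_k^q}$ and using the trivial identities $|\compl{G}|=|G|$ and $|\compl{H_k^q}|=|H_k^q|$ gives
\[
\frac{|G|}{N(\compl{G},K)}\;\le\;\frac{|H_k^q|}{N(\compl{H_k^q},K)},
\]
which contradicts the hypothesis of the theorem. Hence $\lind_q(G)\ge k+1$, as desired.

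The main (and only real) obstacle is a bookkeeping point rather than a mathematical one: Hell and Nesetril state Proposition~1.22 for undirected graphs, whereas we need it for digraphs. The paper has already asserted that ``its proof naturally extends to digraphs,'' so in the proof proposal I would either cite this remark directly or briefly sketch why the standard averaging/covering argument underlying Proposition~1.22 is insensitive to edge orientations (it only uses that $\phi$ maps edges to edges and that the automorphism group of $\compl{H_k^q}$ acts transitively on vertices, both of which are true in the directed setting). No further calculation is needed.
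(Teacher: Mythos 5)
Your proposal is correct and follows essentially the same route as the paper: contradiction via Theorem~\ref{thm:Family_H_is_ICD} to obtain a homomorphism $\compl{G}\to\compl{H_k^q}$, vertex transitivity from Corollary~\ref{cor:H_k_q-compl-vertex_transitive}, and the Hell--Nesetril Proposition~1.22 extended to digraphs. Your added remark on why the averaging argument is orientation-insensitive is a minor bonus but not a departure from the paper's argument.
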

\begin{proof}
We prove the theorem by contradiction.
Suppose that $\lind_q(G)\le k$. Then by Theorem~\ref{thm:Family_H_is_ICD}, we have
$\Hom(\compl{G},\compl{H_k^q})\neq\varnothing$.
Also by Corollary~\ref{cor:H_k_q-compl-vertex_transitive}, $\compl{H_k^q}$ is a vertex transitive
digraph. Now we can apply the result of \cite[Proposition~1.22]{HellNese-Book04-GraphHomomorphism} to conclude that for every digraph $K$ we have
$\frac{|G|}{N(\compl{G},K)} \le \frac{|H_k^q|}{N(\compl{H_k^q},K)}$ which is a contradiction.
Thus we are done.
\end{proof}

An interesting special case of Theorem~\ref{thm:N(G,K)_N(H,K)_LowerBound} is when $K$
is a complete graph with $l$ vertices. In this case $N(G,K)$ is equal to the size of the
largest $l$-colourable induced subgraph of $G$. In particular the following corollary
holds.
\begin{corollary} \label{corollary2}
For every digraph $G$ if 
\[
\frac{|G|}{\omega(G)} > \frac{4(q^k-1)q}{(q^2-1)(q-1)}
\]
then $\lind_q(G) \ge k+1$.
In other words we can state the following lower bound
\[
\lind_q(G) \ge \log_q\left[1 + \frac{(q^2-1)(q-1)}{4q} \frac{|G|}{\omega(G)} \right].
\]
\end{corollary}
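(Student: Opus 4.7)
The plan is to instantiate Theorem~\ref{thm:N(G,K)_N(H,K)_LowerBound} with the trivial target $K = K_1$, the loopless graph on a single vertex. For any digraph $L$, a homomorphism $L' \to K_1$ exists exactly when $L'$ has no edges, so $N(L, K_1)$ equals the size of the largest edgeless induced subgraph of $L$, i.e., $\alpha(L)$. Applying this to $L = \compl{G}$ and $L = \compl{H_k^q}$ and using $\alpha(\compl{L}) = \omega(L)$, the hypothesis of Theorem~\ref{thm:N(G,K)_N(H,K)_LowerBound} reduces to
\[
\frac{|G|}{\omega(G)} \;>\; \frac{|H_k^q|}{\omega(H_k^q)}.
\]

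Next I would plug in the two quantitative facts already established for $H_k^q$: from Construction~\ref{cons:Graph-H_k_q} we have $|H_k^q| = \frac{q^k-1}{q-1}\, q^{k-1}$, and from Lemma~\ref{lem:H_k_q-MaxCliqueSize} we have $\omega(H_k^q) \ge \tfrac{1}{4}(q^2-1)q^{k-2}$. Dividing these gives the clean upper bound
\[
\frac{|H_k^q|}{\omega(H_k^q)} \;\le\; \frac{4(q^k-1)\,q}{(q^2-1)(q-1)}.
\]
Therefore, whenever $G$ satisfies the hypothesis of the corollary, it also satisfies the hypothesis of Theorem~\ref{thm:N(G,K)_N(H,K)_LowerBound} with $K = K_1$, and we conclude $\lind_q(G) \ge k+1$.

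For the logarithmic reformulation, I would just take the contrapositive: assume $\lind_q(G) = k$ and rearrange the displayed inequality to isolate $q^k$. The rearrangement
\[
\frac{|G|}{\omega(G)} \le \frac{4(q^k-1)q}{(q^2-1)(q-1)}
\quad\Longleftrightarrow\quad
q^k \ge 1 + \frac{(q^2-1)(q-1)}{4q}\cdot\frac{|G|}{\omega(G)}
\]
is elementary, and taking $\log_q$ of both sides delivers the stated bound on $\lind_q(G)$.

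There is no real obstacle here; the work has been done in Theorem~\ref{thm:N(G,K)_N(H,K)_LowerBound} and Lemma~\ref{lem:H_k_q-MaxCliqueSize}. The only minor point to verify carefully is that the Hell–Nesetril inequality used to prove Theorem~\ref{thm:N(G,K)_N(H,K)_LowerBound} is still valid when $K$ is taken to be the one-vertex graph, which is immediate since the vertex-transitivity argument does not care about the size of $K$.
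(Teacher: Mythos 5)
Your proposal is correct and follows exactly the paper's own route: instantiate Theorem~\ref{thm:N(G,K)_N(H,K)_LowerBound} with $K=K_1$, identify $N(\compl{G},K_1)=\alpha(\compl{G})=\omega(G)$, and combine $|H_k^q|=\frac{q^k-1}{q-1}q^{k-1}$ with the clique bound of Lemma~\ref{lem:H_k_q-MaxCliqueSize}. The only difference is that you spell out the ratio computation and the logarithmic rearrangement, which the paper leaves implicit.
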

\begin{proof}
In Theorem~\ref{thm:N(G,K)_N(H,K)_LowerBound} set $K$ to be $K_1$. Then
$\omega(G)=\alpha(\compl{G})= N(\compl{G},K_1)$. Now Lemma~\ref{lem:H_k_q-MaxCliqueSize}
completes the proof.
\end{proof}

\begin{example}
Let $G$ be a directed graph with $n$ vertices so that for every pair $(i,j)$ of the vertices either $(i,j)\in E(G)$ or $(j,i)\in E(G)$ but not both. In this case, $\alpha(G)=\omega(G)=1$. The lower bound derived from Corollary~\ref{corollary2} is 
\[
\lind_q(G) \ge \log_q\left[1 + \frac{(q^2-1)(q-1)}{4q} n \right].
\]
This lower bound is slightly better than the lower bound obtained from Theorem~\ref{thm:Ind_LowerBound_ChromaticNumber} and is significantly stronger that the trivial bound $\lind_q(G)\ge \alpha(G)$.
\end{example}

Another special case is when $K$ is equal to the complete undirected graph on $l\geq 2$ vertices, i.e. $K=K_l$. In this case, $N(\compl{G},K)$ is the size of the largest $l$-colorable induced subgraph of $\compl{G}$. For example, $N(\compl{G},K_2)$ is the size of the largest induced bipartite of $\compl{G}$. For these choices of $K$, we are able to find a lower bound on $N(\compl{H_k^q},K)$ and therefore we are able to find an upper bound on $\frac{|G|}{N(\compl{G},K)}$.

The idea of finding a large induced subgraph of $\compl{H_k^q}$ that is $l$-colorable is similar to that of finding a large independent set in $\compl{H_k^q},K$. Formally we have the following result:

\begin{lemma}\label{lem:largeinduced}
let $1<l <k-1$ be a positive integer number. Then
\begin{align*}
N(\compl{H_k^q},K_l) &\geq \sum_{i=1}^{l}{\frac{1}{4} (q^2-1)(q^{k-1-i})}\\
&= \frac{1}{4} (q+1)(q^{l}-1)(q^{k-l-1}).
\end{align*}
\end{lemma}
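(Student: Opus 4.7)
By definition $N(\compl{H_k^q},K_l)$ is the size of a largest induced subgraph of $\compl{H_k^q}$ admitting a homomorphism to $K_l$, i.e., a largest $l$-colorable induced subgraph. The plan is to construct $l$ pairwise disjoint independent sets $\set{I}_1,\ldots,\set{I}_l$ in $\compl{H_k^q}$ of sizes $\tfrac14(q^2-1)q^{k-1-i}$ respectively, and assign color $i$ to every vertex of $\set{I}_i$; the union $\bigcup_i\set{I}_i$ then induces an $l$-colorable subgraph of $\compl{H_k^q}$ of the desired total cardinality.

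\textbf{Stratification.} Call the index of the first nonzero entry of a normal vector its \emph{leading position}. For each $i\in[1:k-1]$ the map $v\mapsto (0,\ldots,0,v)$ prepending $i-1$ leading zeros embeds $H_{k-i+1}^q$ into $H_k^q$ as the induced sub-digraph on pairs $(v,w)$ both of whose components have leading position at least $i$; this is an isomorphism onto its image because prepending zeros preserves inner products (so it preserves both vertex membership and the edge condition in Construction~\ref{cons:Graph-H_k_q}). Applying Lemma~\ref{lem:H_k_q-MaxCliqueSize} inside $H_{k-i+1}^q$ and transporting the resulting independent set back through the embedding yields an independent set $\set{I}_i\subseteq V(\compl{H_k^q})$ made of pairs $(v,w)$ whose two components both have leading position exactly $i$, and of size $\tfrac14(q^2-1)q^{(k-i+1)-2}=\tfrac14(q^2-1)q^{k-1-i}$. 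Since the common leading position distinguishes the sets, the $\set{I}_i$ are pairwise disjoint.

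\textbf{Assembly and obstacle.} Coloring $\set{I}_i$ with color $i$ gives a proper $l$-coloring of the subgraph of $\compl{H_k^q}$ induced by $\bigcup_{i=1}^l\set{I}_i$ (any monochromatic pair lies in a single $\set{I}_i$, which is independent), hence a homomorphism to $K_l$. Therefore
\[
N(\compl{H_k^q},K_l)\ \geq\ \sum_{i=1}^l \tfrac14(q^2-1)\,q^{k-1-i}\ =\ \tfrac14(q+1)(q^l-1)\,q^{k-l-1},
\]
where the closed form comes from the geometric sum $\sum_{i=1}^l q^{k-1-i}=q^{k-1-l}\cdot\frac{q^l-1}{q-1}$. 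The only non-routine step is verifying that the zero-prepending map does realize $H_{k-i+1}^q$ as an induced sub-digraph of $H_k^q$ (so that Lemma~\ref{lem:H_k_q-MaxCliqueSize} can be re-used layer by layer); I expect this to be the main piece to write out carefully, but it amounts to a direct unwinding of Construction~\ref{cons:Graph-H_k_q}, after which everything else is bookkeeping.
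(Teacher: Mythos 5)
Your proof is correct and follows essentially the same route as the paper: the paper's sets $\set{A}_j\times\set{B}_j$ (stratified by the position $j$ of the leading $1$) are exactly the images, under your zero-prepending embedding, of the independent set constructed in Lemma~\ref{lem:H_k_q-MaxCliqueSize} inside $H_{k-j+1}^q$. The only difference is presentational -- you factor the construction through an explicit induced-subgraph embedding and re-use Lemma~\ref{lem:H_k_q-MaxCliqueSize} as a black box, whereas the paper rewrites the shifted sets directly -- and your embedding claim does hold, since prepending zeros preserves normality and inner products.
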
 
\begin{proof}
Similar to the proof of Lemma~\ref{lem:H_k_q-MaxCliqueSize}, Let $g$ be a primitive element of the field $\mbb{F}_q$ and for every $j=1,2,\ldots,l$ define the sets $\set{A}_j$
and $\set{B}_j$ as follows,
\begin{multline*}
\set{A}_j = \left\{ (0,0,\ldots,0,1, g^i, 0,\ldots,0)^\transpose \in\mbb{F}_q^k:\ 0\le i\le \frac{q-3}{2} \right\}\\ \cup \{ \vectm{e}_{j}\}
\end{multline*}
in which the number of the leading zeros are equal to $j-1$ 
and
\begin{multline*}
\set{B}_j = \bigg\{ (0,0,\ldots,0,1,a_{j+1},\ldots,a_k)^\transpose \in\mbb{F}_q^k:\\ a_{j+1}\notin \big\{-g^1,\ldots,-g^{\frac{q-1}{2}} \big\}, 
a_{j+2},\ldots,a_k\in\mbb{F}_q \bigg\}.
\end{multline*}
Notice that $|\set{A}_j|=\frac{q+1}{2}$ and $|\set{B}_j|=\frac{q-1}{2} q^{k-j-1}$ and also
for every $v\in\set{A}_j$ and $w\in\set{B}_j$, $\inner{v}{w}\neq 0$. Hence,
$\set{A}_j\times\set{B}_j \subseteq V(\compl{H_k^q})$ are disjoint independent sets of $\compl{H_k^q}$
of size $\frac{(q^2-1)}{4} q^{k-j-1}$. The union of  $\set{V}_j=\set{A}_j\times\set{B}_j$ together with the edges between different $\set{V}_i$'s form an $l$-colorable induced subgraph of $\compl{H_k^q}$ of desired size.
\end{proof}

\begin{corollary}
For every digraph $G$, if $\lind_q(G)=k$  and $l<k-1$ then $$ \frac{|G|}{N(\compl{G},K_l)} \leq  \frac{4q^l(q^k-1)}{ (q^2-1)(q^l-1)}.$$
Equivalently, 

\[
\lind_q(G) \ge \log_q\left[1 + \frac{(q^2-1)(q^l-1)}{4q^l} \frac{|G|}{N(\compl{G},K_l)} \right].
\]
\end{corollary}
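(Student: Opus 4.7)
The plan is to combine Theorem~\ref{thm:N(G,K)_N(H,K)_LowerBound}, specialized to the complete undirected graph $K=K_l$, with the lower bound on $N(\compl{H_k^q},K_l)$ from Lemma~\ref{lem:largeinduced} and the explicit vertex count $|H_k^q|=\frac{q^k-1}{q-1}\,q^{k-1}$ recorded in Construction~\ref{cons:Graph-H_k_q}. Conceptually, everything needed for the argument has already been proved; what remains is only a substitution and a rearrangement.

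First, I would invoke the contrapositive of Theorem~\ref{thm:N(G,K)_N(H,K)_LowerBound}. Under the assumption $\lind_q(G)=k$, in particular $\lind_q(G)\le k$, so for every digraph $K$ we must have
\[
\frac{|G|}{N(\compl{G},K)} \;\le\; \frac{|H_k^q|}{N(\compl{H_k^q},K)}.
\]
Taking $K=K_l$ produces the inequality that I will bound from above on the right-hand side.

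Second, I would substitute the size formula $|H_k^q|=\frac{(q^k-1)q^{k-1}}{q-1}$ from Construction~\ref{cons:Graph-H_k_q} and the lower bound $N(\compl{H_k^q},K_l)\ge \frac{1}{4}(q+1)(q^l-1)q^{k-l-1}$ from Lemma~\ref{lem:largeinduced} (whose hypothesis $1<l<k-1$ matches the hypothesis of the corollary). Forming the ratio and canceling the common factor $q^{k-l-1}$ in the numerator and denominator, together with the identity $(q-1)(q+1)=q^2-1$, leaves exactly $\frac{4q^l(q^k-1)}{(q^2-1)(q^l-1)}$. This establishes the first displayed inequality of the corollary.

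Third, to pass to the equivalent logarithmic form, I would solve for $k$: multiplying both sides of the inequality by $\frac{(q^2-1)(q^l-1)}{4q^l}$ and moving constants across gives
\[
q^k \;\ge\; 1 + \frac{(q^2-1)(q^l-1)}{4q^l}\cdot\frac{|G|}{N(\compl{G},K_l)},
\]
after which applying $\log_q$ to both sides yields the stated lower bound on $\lind_q(G)$. I do not anticipate any real obstacle: the only points requiring care are verifying that the hypothesis $l<k-1$ of Lemma~\ref{lem:largeinduced} is indeed the one assumed here and checking that all quantities involved remain positive (true for $q\ge 2$ and $l\ge 2$) so that the direction of the inequality is preserved throughout the algebraic manipulation.
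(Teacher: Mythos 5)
Your proposal is correct and follows essentially the same route the paper intends: the paper's proof is simply the remark that the argument mirrors Corollary~\ref{corollary2}, i.e., apply Theorem~\ref{thm:N(G,K)_N(H,K)_LowerBound} with $K=K_l$ and bound $N(\compl{H_k^q},K_l)$ via Lemma~\ref{lem:largeinduced}, exactly as you do. Your algebra (cancelling $q^{k-l-1}$ and using $(q-1)(q+1)=q^2-1$) checks out and the passage to the logarithmic form is routine.
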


The proof of the previous Corollary is very similar to that of Corollary \ref{corollary2}.

\subsection{Computational Complexity of $\lind_q(\cdot)$}
By applying the result of \cite{BartoKozikNiven-SIAM09} that proves the conjecture
\cite[Conjecture~5.16]{HellNese-Book04-GraphHomomorphism}
and using Theorem~\ref{thm:Family_H_is_ICD}, we show that the decision problem
of $\lind_q(G)\le k$ is NP-complete for $k\ge 2$.

First, let us state the following theorem.
\begin{theorem}[{see \cite{BartoKozikNiven-SIAM09} and \cite[Conjecture~5.16]{HellNese-Book04-GraphHomomorphism}}] \label{thm:If_H_no_directed_cycle_NP_complete}
Suppose that $H$ is a digraph with no vertex of in degree or out degree one. If core of $H$ is
not a directed cycle the following decision problem is NP-complete.
\begin{framed}
\begin{tabular}{r p{6cm}}
Input: & A digraph $G$\\
Output: & If $\hom(G,H)\neq \varnothing$ then return ``yes,'' else return ``no.''
\end{tabular}
\end{framed}
\end{theorem}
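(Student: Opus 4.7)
The plan is to prove this via the universal algebraic approach to constraint satisfaction complexity. First, membership in NP is immediate: a homomorphism $\phi : V(G) \to V(H)$ serves as a polynomial-size certificate whose edge-preservation can be checked in polynomial time by scanning the edges of $G$. So all the content lies in the NP-hardness.

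For the hardness, I would invoke the Bulatov--Jeavons--Krokhin algebraic reduction framework. The complexity of $\Hom(\cdot, H)$ is controlled by the polymorphism clone $\mathrm{Pol}(H)$, i.e., the operations $f : V(H)^n \to V(H)$ that preserve every edge of $H$. A classical consequence of this framework is that whenever $H$ admits no Taylor polymorphism (equivalently, no weak near-unanimity operation of some prime arity), $\Hom(\cdot,H)$ is NP-hard via a primitive-positive interpretation of a known NP-hard CSP such as $1$-in-$3$-SAT or $\Hom(\cdot, K_3)$. Consequently, the theorem reduces to the following purely algebraic claim: if $H$ is smooth (no sources, no sinks) and $\core(H)$ is not a directed cycle, then $H$ admits no Taylor polymorphism.

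This last implication is the technical heart of Barto--Kozik--Niven, and the plan is to establish it through absorption theory. A subset $B \subseteq V(H)$ is \emph{absorbing} if some polymorphism of $H$ sends $B^{n-1}\times V(H)$, together with all coordinate permutations of this product, into $B$. The strategy is to show that in a smooth digraph, the existence of a Taylor polymorphism forces absorbing subuniverses to propagate along directed edges in a controlled fashion, yielding a chain of progressively smaller absorbing retracts. Iterating, one would deduce that the core must collapse onto a structure with no nontrivial absorption, and then a combinatorial analysis of smooth digraphs with this property shows that the only possibility is a directed cycle. This contradicts the hypothesis on $\core(H)$, so no Taylor polymorphism can exist.

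The main obstacle is developing the absorption machinery cleanly for digraphs: the undirected analogue is the Hell--Ne\v{s}et\v{r}il theorem, but the directed setting requires a finer analysis because directed cycles are precisely the tractable obstruction, and absorption behaves subtly near them — one needs intermediate notions such as \emph{central} subuniverses and \emph{linked} congruences, together with the key lemma that sufficiently high powers of a smooth digraph admit surjective homomorphisms onto their factors. Because this algebraic apparatus lies entirely outside graph homomorphism proper, we invoke Barto--Kozik--Niven as a black box rather than reproducing the argument.
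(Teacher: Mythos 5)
The paper gives no proof of this theorem at all: it is imported verbatim from \cite{BartoKozikNiven-SIAM09} as the resolution of \cite[Conjecture~5.16]{HellNese-Book04-GraphHomomorphism}, and your proposal likewise ends by invoking Barto--Kozik--Niven as a black box, so the two treatments coincide in substance. Your accompanying sketch (NP membership via the homomorphism certificate, NP-hardness via the absence of a Taylor/weak-near-unanimity polymorphism and absorption-style analysis of smooth digraphs) is a fair high-level account of how the cited result is actually proved, and you correctly flag that this machinery lies outside the present paper.
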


We will apply Theorem~\ref{thm:If_H_no_directed_cycle_NP_complete} to digraphs $H_k^q$
for every $k\ge 2$.

\begin{theorem}\label{thm:lind_q-decision-problem-NP_Complete}
For every finite field $\mbb{F}_q$ and every $k\ge 2$ the following decision problem is 
NP-complete.
\begin{framed}
\begin{tabular}{r p{6cm}}
Input: & A digraph $G$\\
Output: & If $\lind_q(G)\le k$ return ``yes,'' otherwise return ``no.''
\end{tabular}
\end{framed}
\end{theorem}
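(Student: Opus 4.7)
The plan is to recognise the decision problem as a directed graph homomorphism problem and then invoke Theorem~\ref{thm:If_H_no_directed_cycle_NP_complete}. By Theorem~\ref{thm:Family_H_is_ICD}, $\lind_q(G)\le k$ is equivalent to the existence of a homomorphism $\compl{G}\rightarrow\compl{H_k^q}$; taking directional complements is polynomial, so the two decision problems are polynomial-time equivalent. Membership in NP is immediate because $\compl{H_k^q}$ has constant size (depending only on the fixed parameters $q$ and $k$): one guesses a vertex map and verifies it in polynomial time. For NP-hardness I will apply Theorem~\ref{thm:If_H_no_directed_cycle_NP_complete} with $H:=\compl{H_k^q}$, which reduces the task to checking its two hypotheses.

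For the first hypothesis, that no vertex of $\compl{H_k^q}$ has in- or out-degree $1$: Lemma~\ref{lem:H_k_q-vertex_transitive} and Corollary~\ref{cor:H_k_q-compl-vertex_transitive} say that both $H_k^q$ and $\compl{H_k^q}$ are vertex transitive, and hence both in- and out-regular. Using the stated in/out-degree $q^{2(k-1)}-1$ of $H_k^q$ on $\frac{q^k-1}{q-1}q^{k-1}$ vertices, a short calculation yields the common in- and out-degree of $\compl{H_k^q}$ as $\frac{q^{k-1}(q^{k-1}-1)}{q-1}$, which is at least $q\ge 2$ for every $k\ge 2$ and every prime power $q\ge 2$.

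For the second hypothesis, that the core of $\compl{H_k^q}$ is not a directed cycle: since $\compl{H_k^q}$ is loop-less, so is its core, so it suffices to rule out loop-less directed cycles $\vec{C}_n$ with $n\ge 2$. The plan is to exhibit two structural obstructions. First, the vertices $(\vectm{e}_1,\vectm{e}_1)$ and $(\vectm{e}_2,\vectm{e}_2)$ form a digon in $\compl{H_k^q}$ because $\langle\vectm{e}_1,\vectm{e}_2\rangle=\langle\vectm{e}_2,\vectm{e}_1\rangle=0$ while both inner products $\langle\vectm{e}_i,\vectm{e}_i\rangle$ are nonzero; since $\vec{C}_n$ for $n\ge 3$ contains neither a digon nor a loop, no homomorphism $\compl{H_k^q}\rightarrow \vec{C}_n$ exists for such $n$. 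Second, $\compl{H_k^q}$ contains a triangle in its underlying undirected graph: for $k\ge 3$ the three vertices $(\vectm{e}_i,\vectm{e}_i)$, $i=1,2,3$, are pairwise adjacent, and for $k=2$ one chooses three distinct normal vectors $v_1,v_2,v_3\in\set{V}$ (possible because $|\set{V}|=q+1\ge 3$) and sets $w_{i+1}$ to be the unique normal vector in the orthogonal complement of $v_i$, producing a directed $3$-cycle on the vertices $(v_i,w_i)$ of $\compl{H_2^q}$. Such an odd cycle rules out a homomorphism into the digon $\vec{C}_2$, which would amount to a proper $2$-colouring of the underlying graph. Together these observations eliminate all $\vec{C}_n$ with $n\ge 2$; Theorem~\ref{thm:If_H_no_directed_cycle_NP_complete} then applies and NP-completeness follows. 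The most delicate step of the plan is this final case analysis, especially ensuring that the triangle construction at $k=2$ works uniformly in $q$, including the small regime $q=2$ where self-orthogonal vectors exist.
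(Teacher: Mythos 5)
Your proposal is correct and follows essentially the same route as the paper: both reduce the question via Theorem~\ref{thm:Family_H_is_ICD} to a digraph homomorphism problem, invoke Theorem~\ref{thm:If_H_no_directed_cycle_NP_complete} with $H=\compl{H_k^q}$, and rule out a directed-cycle core by exhibiting a digon together with a directed triangle (the paper packages these into a single four-vertex induced subgraph valid uniformly for all $k\ge 2$, whereas you split into the cases $k\ge 3$ and $k=2$). Your handling of the degree hypothesis, via vertex transitivity and an explicit count giving common in/out-degree $q^{k-1}(q^{k-1}-1)/(q-1)\ge q$, is in fact slightly more careful than the paper's remark that $\compl{H_k^q}$ has no source or sink.
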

\begin{proof}
Based on Theorem~\ref{thm:Family_H_is_ICD} and Theorem~\ref{thm:If_H_no_directed_cycle_NP_complete},
it is sufficient to show that $\compl{H_k^q}$ has no vertex of in/out degree one and also
its core is not a directed cycle. From Construction~\ref{cons:Graph-H_k_q} we know 
that for $k\ge 2$, $\compl{H_k^q}$ has
no source or sink. It remains to show that the core of $\compl{H_k^q}$ for any 
prime power $q$ and any integer $k$, is not a directed cycle.

To this end, consider the subgraph $D$ of $\compl{H_k^q}$ depicted in 
Figure~\ref{fig:Induced-Subgraph-of-H_k_q-Compl}.
\begin{figure}
\begin{center}
\begin{tikzpicture}[->,>=stealth', shorten >=1pt, auto, node distance=2cm,
  thick, main node/.style={circle,fill=gray!20, draw, minimum size=4mm, font=\small}]

  \draw (0,0) node[main node] (1) {} [anchor=west] node {~~$(\vectm{e}_2, \vectm{e}_1-\vectm{e}_2)$};
  \draw (1,1.5) node[main node] (2) {} [anchor=west] node {~~$(\vectm{e}_1+\vectm{e}_2, \vectm{e}_2)$};
  \draw (-1,1.5) node[main node] (3) {} [anchor=east] node {$(\vectm{e}_1, \vectm{e}_1)$~~~};
  \draw (-2,0) node[main node] (4) {} [anchor=east] node {$(\vectm{e}_2, \vectm{e}_2)$~~};

  \path[every node/.style={font=\sffamily\small}]
	(1) edge (3)
	(3)	 edge (2)
	(2)	 edge (1)
	(4) edge [bend left] (3)
	(3) edge [bend left] (4);
\end{tikzpicture}
\end{center}
\caption{The induced subgraph $D$ of $\compl{H_k^q}$ which is used in the 
proof of Theorem~\ref{thm:lind_q-decision-problem-NP_Complete} 
to show that the core of $\compl{H_k^q}$ cannot be
a directed cycle.} \label{fig:Induced-Subgraph-of-H_k_q-Compl}
\end{figure}
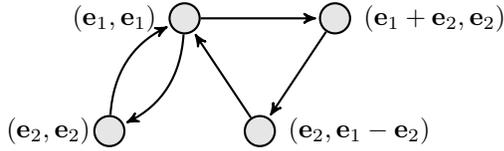
Clearly from this induced subgraph $D$ there is no homomorphism to any
directed cycle. Thus the core of $\compl{H_k^q}$ cannot be a directed cycle.
\end{proof}

Notice that the NP-completeness of the decision problem $\lind_q(G)\le k$
was known for $q=2$ and $k\ge 3$, see \cite{PeetersRene-Combinatorica96,ChlamHaviv-CoRR11}. In fact this
decision problem is NP-complete even when $G$ is an undirected graph \cite{PeetersRene-Combinatorica96}.
However, for undirected graphs, the decision problem $\lind_q(G)\le 2$ is
polynomially solvable since for undirected graphs we have $\lind_q(G)\le 2$
if and only if $G$ is complement of a bipartite graph. Theorem~\ref{thm:lind_q-decision-problem-NP_Complete} shows
that for digraphs, even the decision problem $\lind_q(G)\le 2$ is hard. 

\section{Conclusion}\label{sec:Conclusion}
It is previously shown that scalar linear index of a digraph $G$ over $\mbb{F}_q$ is less than ot equal to $k$ if there exists a graph homomorphism from the complement of $G$ to the complement of certain graphs denoted by $H_k^q$ \cite{EbrahimiJafari-ITW14} (see also \cite{ChlamHaviv-CoRR11} for the special case of undirected graphs and $\mbb{F}_2$).

In this work, we investigated the structure and properties of digraphs $H_k^q$. In particular, similar to the result of \cite{ChlamHaviv-CoRR11} for undirected graphs, we show that $H_k^q$ are vertex transitive. Using this, and by applying a result from Hell and Nesetril \cite{HellNese-Book04-GraphHomomorphism}, we derive a class of necessary conditions for digraph $G$ to satisfy $\lind_q(G)\le k$.

Based on the structure of graphs $H_k^q$, our next result is about the computational complexity of $\lind_q(G)$.
It is known that the deciding whether the scalar linear index of a undirected graph is equal to $k$ or not is NP-complete for $k\ge 3$ and is polynomially decidable for $k=1,2$ \cite{PeetersRene-Combinatorica96}. However, using the graph homomorphism framework, we show that this decision problem is NP-complete even for $k=2$ if we consider digraphs.

\section*{Acknowledgement}

The authors would like to thank Dr. P.~Hell, Dr. S.~Jaggi and Dr. A.~Rafiey for their invaluable comments.

\bibliographystyle{IEEEtran}
\bibliography{../index_coding}
\end{document}